\newtheorem{theorem}{Theorem}
\newtheorem{lemma}{Lemma}
\theoremstyle{remark}
\numberwithin{equation}{section}
\DeclareMathOperator{\Q}{Q}
\newcommand{\ket}[1]{\mathinner{\left \lvert #1 \right\rangle}}
\newcommand\lr[1]{\left( #1 \right)}
\newcommand\lrv[1]{\left|  #1 \right|}
\newcommand\lrb[1]{\left\lbrace #1 \right\rbrace}
\newcommand{\mbb}[1]{\mathbb{#1}}
\newcommand{\mc}[1]{\mathcal{#1}}
\colorlet{blueish}{NavyBlue!75} 
\date{}
\title{Improved Algorithm and Lower Bound for Variable Time Quantum Search}
\author{Andris Ambainis}
\author{Martins Kokainis}
\author{Jevg\=enijs Vihrovs}
\affil{Center for Quantum Computer Science, Faculty of Computing,
University of Latvia} 
\begin{document}
\maketitle

\begin{abstract}
We study variable time search, a form of quantum search where queries to different items take different time. 
Our first result is a new quantum algorithm that performs variable time search with complexity $O(\sqrt{T}\log n)$ where $T=\sum_{i=1}^n t_i^2$ with $t_i$ denoting the time to check the $i^{\rm th}$ item. Our second result is a quantum lower bound of $\Omega(\sqrt{T\log T})$. Both the algorithm and the lower bound improve over previously known results by a factor of $\sqrt{\log T}$ but the algorithm is also substantially simpler than the previously known quantum algorithms.
\end{abstract}

\section{Introduction}

We study variable time search \cite{ambainis2010variable}, a form of quantum search in which the time needed for a query depends on
which object is being queried. Variable time search and its generalization, variable time amplitude \cite{ambainis2012variable}
amplification, are commonly used in quantum algorithms. For example,
\begin{itemize}
\item
Ambainis \cite{ambainis2012variable} used variable time amplitude amplification to improve the running time of HHL quantum algorithm for solving systems of linear equations  \cite{harrow2009quantum} from $\widetilde{O}(\kappa^2)$ (where $\kappa$ is the condition number of the system) to $\widetilde{O}(\kappa^{1+o(1)})$ in different contexts;
\item
Childs et al.~\cite{childs2017quantum} used variable time amplitude amplification to design a quantum algorithm for solving systems of linear equations with an exponentially improved dependence of the running time
on the required precision;
\item 
Le Gall \cite{legall2014improved} used variable time search to construct the best known quantum algorithm for triangle finding, with a running time $\widetilde{O}(n^{5/4})$ where $n$ is the number of vertices;
\item
De Boer et al.~\cite{boer2018attacks} used variable time search to optimize the complexity of quantum attacks against a post-quantum cryptosystem;
\item
Glos et al.~\cite{glos2021quantum} used variable time search to develop a quantum speedup for a classical dynamic programming algorithm.
\item Schrottenloher and Stevens \cite{Schrottenloher2022} used variable time amplitude amplification to transform a classical nested search into a quantum
algorithm, with applications to quantum attacks on AES.
\end{itemize}

In those applications, the oracle for the quantum search is a quantum algorithm whose running time depends on the item that is being queried. 
For example, we might have a graph algorithm that uses quantum search to find a vertex with
a certain property and the time $t_v$ to check the property may depend on the degree of the vertex $v$.  

In such situations, using standard quantum search would mean  using maximum time $t_{\max} = \max_v t_v$ for each query.
If most times $t_v$ are substantially smaller,
this results in suboptimal quantum algorithms. 

A more efficient strategy is to use the variable time quantum search algorithm \cite{ambainis2010variable}. 
It has two variants: the ``known times'' variant when times $t_v$ for checking various $v$ are known in advance and can be used to design the algorithm and the ``unknown times'' variant in which $t_v$ are only discovered when running the algorithm. In the ``known times'' case, VTS (variable time search) has complexity $O(\sqrt{T})$ where $T=\sum_{v} t_v^2$ and there is a matching lower bound \cite{ambainis2010variable}.

For the ``unknown times'' case, the complexity of the variable time search increases to $O(\sqrt{T} \log^{1.5} T)$ and the quantum algorithm becomes substantially more complicated. 
Since almost all of the applications of VTS require the ``unknown times'' setting, it may be interesting to develop a simpler quantum algorithm. 

In more detail, the ``unknown times'' search works by first running the query algorithm for a small time $T_1$ and then amplifying $v$ for which the query either returns a positive result or does not finish in time $T_1$. This is followed by running the query algorithm for longer time $T_2$, $T_3$, $\ldots$ and each time, amplifying $v$ for which the query either returns a positive result or does not finish in time $T_i$. To determine the necessary amount of amplification, quantum amplitude estimation is used. This results in a 
complex algorithm consisting of interleaved amplification and estimation steps. This complex structure contributes to the complexity of the algorithm, via log factors and may also lead to large constants hidden under the big-$O$.

In this paper, we develop a simple algorithm for variable time search that uses only amplitude amplification. Our algorithm achieves the complexity of $O(\sqrt{T} \log n)$ 
where $T$ is an upper bound for $\sum_v t_v^2$ provided to the algorithm. (Unlike in the ``known times'' model, we do not need to provide $t_1, \ldots, t_n$ but only an estimate for $T$.) This also improves over the previous algorithm by a $\sqrt{\log}$ factor.

To summarize, the key difference from the earlier 
algorithms 
\cite{ambainis2010variable,ambainis2012variable} is that the earlier algorithms would use amplitude estimation (once for each amplification step) to determine the optimal schedule for amplitude amplification for
this particular $t_1, \ldots, t_n$.  
In contrast, we use one fixed schedule for amplitude amplification (that depends only on the estimate for $T$ and not on $t_1, \ldots, t_n$). While this schedule may be slightly suboptimal, the losses from it being suboptimal are less than savings from not performing multiple rounds of amplitude estimations. This also leads to the quantum algorithm being substantially simpler.

Our second result is a lower bound of $\Omega(\sqrt{T \log T})$, showing that a complexity of $\Theta(\sqrt{T})$ is not achievable. The lower bound is by creating a query problem which can be solved by variable time search and using the quantum adversary method to show a lower bound for this problem.
In particular, this proves that ``unknown times'' search is more difficult than ``known times'' search (which has the complexity of $\Theta(\sqrt{T})$).

\section{Model, definitions, and previous results}

We consider the standard search problem in which the input consists of variables $x_1, \ldots, x_n\in\{0, 1\}$ and the task is to find $i:x_i=1$ if such $i$ exists.

Our model is a generalization of the usual quantum query model. We model a situation when the variable $x_i$ is computed by an algorithm $Q_i$ which is initialized in the state $\ket{0}$ and, after $t_i$ steps, outputs the final state $\ket{x_i}\ket{\psi_i}$ for some unknown $\ket{\psi_i}$. (For most of the paper, we restrict ourselves 
to the case when $Q_i$ always outputs the correct $x_i$.
The bounded error case is discussed briefly at the end of this section.)
In the first $t_{i}-1$ steps, $Q_i$ can be in arbitrary 
intermediate states.

The goal is to construct an algorithm $A$ that finds $i:x_i=1$ (if such $i$ exists).
The algorithm $A$ can run   
$Q_i$ for a chosen $t$,
with $Q_i$ outputting $x_i$ if $t_i\leq t$ or * (an indication that the computation is not complete) if $t_i>t$.
The complexity of $A$ is the amount of time that is spent running the  algorithms $Q_i$. Transformations that does not involve running $Q_i$ do not count towards the complexity.

More formally, we assume that, for any $T$, there is a circuit $C_T$ which, on an input $\sum_{i=1}^n \ket{i} \otimes \ket{0}$ 
outputs
\[ \sum_{i=1}^n \ket{i} \otimes \ket{y_i} \otimes \ket{\psi_i} \]
where $y_i=x_i$ if $t_i\leq T$ and $y_i =*$ if $t_i>T$. 
The state $\ket{\psi_i}$ contains intermediate results of the computation and can be arbitrary.
An algorithm $A$ for variable time search consists of two types of transformations:
\begin{itemize}
\item
circuits $C_T$ for various $T$;
 \item
transformations $U_i$ that are independent of $x_1, \ldots, x_n$.
\end{itemize}
If there is no intermediate measurements, an algorithm $A$ is of the form  
\[ U_k C_{T_k} U_{k-1} \ldots U_1 C_{T_1} U_0 \]
and its complexity is defined as $T_1+T_2+\ldots+T_k$. 
In the general case, an algorithm is a sequence
\[ U_0, C_{T_1}, U_1, \ldots, C_{T_k}, U_k \]
with intermediate measurements. Depending on the outcomes of those measurements,
the algorithm may stop and output the result or continue with the next transformations.
The complexity of the algorithm is defined as $p_1 T_1 + \ldots + p_k T_k$ 
where $p_i$ is the probability that $C_{T_i}$ is performed. (One could also allow $U_i$ and $T_i$ to vary depending on the results of previous measurements but this will not be necessary for our algorithm.)

If there exists $i:x_i=1$,
$A$ must output one of such $i$ with probability at least 2/3. If $x_i=0$, $A$ must output ``no such $i$'' with probability at least 2/3.

{\bf Known vs.~unknown times.}
This model can be studied in two variants. In the ``known times'' variant, 
the times $t_i$ for each $i\in [n]$ are known in advance and can 
be used to design the search algorithm. In the ``unknown times'' variant,
the search algorithm should be independent of 
the times $t_i$, $i\in [n]$.

The complexity of the variable time search is characterized by the parameter
$T = \sum_{i=1}^n t_i^2$. We summarize the previously known results below.

\begin{theorem}
\label{thm:known}
\cite{ambainis2010variable,ambainis2012variable}
\begin{enumerate}[label=(\alph*)]
    \item
    {\bf Algorithm -- known times:} For any $t_1, \ldots, t_n$, there is a variable time search algorithm $A_{t_1, \ldots, t_n}$ with the complexity $O(\sqrt{T})$.
    \item
    {\bf Algorithm -- unknown times:} There is a variable time search algorithm $A$ with the complexity $O(\sqrt{T} \log^{1.5} T)$ for the case when $t_1, \ldots, t_n$ are not known in  advance.
    \item
    {\bf Lower bound -- known times.} For any $t_1, \ldots, t_n$ and any variable time search algorithm $A_{t_1, \ldots, t_n}$, its complexity must be $\Omega(\sqrt{T})$.
\end{enumerate}

\end{theorem}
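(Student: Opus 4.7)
The three parts use quite different techniques, so I would handle them separately.

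For (a), my plan is a dyadic decomposition combined with nested amplitude amplification. Sort so that $t_1 \leq \cdots \leq t_n$ and partition the indices into $L=O(\log t_{\max})$ buckets $S_k=\{i:2^{k-1}<t_i\leq 2^k\}$. Within bucket $S_k$ every query costs essentially $2^k$, so running $C_{2^k}$ restricted to indices in $S_k$ and applying standard amplitude amplification finds a marked item in time $O(2^k\sqrt{|S_k|})=O\bigl(\sqrt{\sum_{i\in S_k}t_i^2}\bigr)$. I then combine the $L$ bucket-level sub-algorithms by an outer amplitude amplification that treats each as a sub-oracle; the usual $\ell_2$-combination rule for amplification costs gives a total of $O\bigl(\sqrt{\sum_k |S_k|\cdot 2^{2k}}\bigr)=O(\sqrt{T})$.

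For (b), since the $t_i$ are unknown I cannot form the buckets in advance, but I can simulate their effect by running $C_{T_\ell}$ with geometrically increasing budgets $T_\ell=2^\ell$ for $\ell=1,\ldots,\lceil\log t_{\max}\rceil$ and calling an item ``good'' at stage $\ell$ if $C_{T_\ell}$ returns either $x_i=1$ or the not-done symbol $*$. At each stage I would first use amplitude estimation to approximate $p_\ell:=\Pr[\text{good at stage }\ell]$ and then amplify to constant probability using $\Theta(1/\sqrt{p_\ell})$ iterations of $C_{T_\ell}$; items in the not-done class at stage $\ell$ are resolved at later stages, and the only surviving good items at the last stage are the true marked ones. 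Ignoring logarithms the total cost telescopes by Cauchy--Schwarz to $O(\sqrt{T}\log T)$, and the extra $\sqrt{\log T}$ comes from the precision amplitude estimation needs so that a union bound over the $O(\log T)$ stages keeps the overall success probability bounded below.

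For (c), I would apply the weighted quantum adversary method. Fix $t_1,\ldots,t_n$ and consider the family of inputs with exactly one planted index $i^\ast$ having $x_{i^\ast}=1$. Choose adversary weights $w_{x,y}\propto t_i t_j$ for inputs $x,y$ that differ by moving the planted $1$ from position $i$ to position $j$, and exploit the structural fact that one call to $C_T$ can change amplitude only in coordinates with $t_i\leq T$, so the per-step progress must be charged against $T$ rather than $1$. Evaluating the adversary ratio with these weights yields the lower bound $\Omega\bigl(\sqrt{\sum_i t_i^2}\bigr)=\Omega(\sqrt{T})$.

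The main obstacle I anticipate is part (b): the interleaving of estimation and amplification has delicate error propagation, because a bad estimate at an early stage can amplify an unwanted component of the state and derail later stages. Controlling this requires tuning precision parameters carefully and is exactly where the $\log^{1.5}T$ factor enters. Part (a) is routine nested-amplification bookkeeping once the dyadic structure is set up, and for part (c) the only nonstandard step is matching the adversary weights to the query cost structure so that the progress bound correctly reflects that $C_T$ can ``see'' only items with $t_i\leq T$.
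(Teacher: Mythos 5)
This theorem is a citation: the paper explicitly states that parts (a) and (c) come from \cite{ambainis2010variable} and part (b) from \cite{ambainis2012variable}, and gives no proof of its own. So there is no proof in the paper to compare against; I can only assess your sketches on their own terms.

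Your sketch for (b) matches the informal description the paper itself gives in the introduction (interleaved amplitude estimation and amplification over geometric time budgets, with $\log T$ stages and $\sqrt{\log T}$ precision overhead). Your sketch for (c) is in the right spirit but leaves the central difficulty unaddressed: the adversary method as usually stated bounds the number of oracle calls, not the time spent running $C_T$. You need to say concretely how a call to $C_T$ of cost $T$ is charged against the progress function --- e.g.\ by decomposing $C_T$ into $T$ unit-cost quantum gates and noting that the gates cannot move amplitude between inputs that agree on all coordinates with $t_i\le T$, or by realizing the time cost as genuine query cost through a composed search instance (which is exactly what the paper does in its own Theorem~\ref{thm:lb}). ``Per-step progress charged against $T$ rather than $1$'' names the issue but does not resolve it.

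The clearest gap is in (a). The ``$\ell_2$-combination rule'' you invoke --- combining $L$ sub-oracles of costs $c_1,\dots,c_L$ into a single search of cost $O(\sqrt{\sum_k c_k^2})$ --- is not a black-box tool; it \emph{is} the variable time search upper bound, so you cannot appeal to it when proving (a). Running the bucket algorithms sequentially costs $\sum_k c_k = \sum_k 2^k\sqrt{|S_k|}$, which Cauchy--Schwarz only bounds by $\sqrt{L\sum_k 2^{2k}|S_k|}=\sqrt{T\log T}$, losing a $\sqrt{\log T}$ factor; and an outer Grover over which bucket holds the solution is even worse, because each outer iteration must apply the full (and expensive) combined oracle. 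The actual known-times construction processes the time thresholds in increasing order and interleaves calls to $C_{T_k}$ with amplification steps whose strengths are tuned to the known bucket sizes $|S_k|$, so that the contributions telescope to $O(\sqrt{T})$. This is precisely the ``known schedule'' analogue of the paper's Algorithm~\ref{alg:alg1}, and that tuning is what your sketch is missing.
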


Parts (a) and (c) of the theorem are from \cite{ambainis2010variable}. Part (b) is from \cite{ambainis2012variable}, specialized to the case of search.

In the recent years there have been attempts to reproduce and improve the aforementioned results by other means. In \cite{cornelissen_et_al:LIPIcs:2020:12694}, the authors obtain a variant of Theorem \ref{thm:known}(a) by converting the original algorithms into span programs, which then are composed and subsequently converted back to a quantum algorithm. More recently, \cite{Jeffery2022}  gives variable time quantum walk algorithm (which generalizes variable time quantum search) by employing a  recent technique of multidimensional quantum walks. While the focus of these two papers is on developing very general frameworks, our focus is on making the variable time search algorithm simpler.

Concurrently and independently of our work, a similar algorithm for variable time amplitude amplification was presented in \cite{Schrottenloher2022}, which also relies on recursive nesting of quantum amplitude amplifications. 

{\bf Variable time search with bounded error inputs.} We present our results for the case when the queries $Q_i$ are perfect (have no error) but our algorithm can be extended to the case if $Q_i$ are bounded error algorithms, at the cost of an extra logarithmic factor.

Let $k$ be the maximum number of calls to $C_T$'s in an algorithm $A$. Then, it suffices that each $C_T$ outputs a correct answer with a probability $1-o(1/k^2)$. 
This can be achieved by repeating $C_T$ $O(\log k)$ times and taking the majority of answers.

Possibly, this logarithmic factor can be removed using methods similar to ones for search with bounded error inputs in the standard (not variable time) setting \cite{Hoyer2003quantum}.

\section{Algorithm}

We proceed in two steps. We first present a simple algorithm for the case 
when a sufficiently good bound on the number of solutions $m=|i:x_i=1|$ are known (Section \ref{sec:known}). We then present an algorithm for the general case that calls the simple algorithm multiple times,
with different estimates for the parameter $\ell$ corresponding to $m$ (Section \ref{sec:unknown}).

Both algorithms require an estimate $T$ for which $\sum_{i=1}^n t_i^2 \leq T$, with the complexity depending on $T$.

\subsection{Tools and methods}

Before presenting our results, we describe the necessary background about quantum amplitude amplification \cite{BHMT02}.

{\bf Amplitude amplification -- basic construction.} 
Assume that we have an algorithm $A$ that succeeds with a small probability 
and it can be verified whether $A$ has succeeded. Amplitude amplification is a procedure for increasing the success
probability. Let 
\[ A\ket{0} = \sin \alpha \ket{\psi_{\text{succ}}} + \cos \alpha \ket{\psi_{\text{fail}}} .\]
Then, there is an algorithm $A(k)$ that involves $k+1$ applications of $A$ and $k$ applications of $A^{-1}$
such that  
\[ 
A(k)\ket{0} = \sin \lr{(2k+1) \alpha} \ket{\psi_{\text{succ}}} 
+ 
\cos \lr{(2k+1) \alpha} \ket{\psi_{\text{fail}}}.\] 
Knowledge of $\alpha$ is not necessary (the way how $A(k)$ is obtained from $A$ is independent of $\alpha$).

{\bf Amplitude amplification -- amplifying to success probability $1-\delta$.}
If $\alpha$ is known then one can choose $k = \lfloor \frac{\pi}{4 \alpha}\rfloor$ to amplify to a success
probability close to 1 (since $(2k+1) \alpha$ will be close to $\frac{\pi}{2}$).
If the success probability of $A$ is $\epsilon$, then $\sin \alpha \approx \sqrt{\epsilon}$ 
and $k \approx \frac{\pi}{4\sqrt{\epsilon}}$. 

For unknown $\alpha$, amplification to
success probability $1-\delta$ for any $\delta>0$ can be still achieved, via a more complex algorithm.  
Namely, for any $\epsilon, \delta\in(0, 1)$
and any $A$, one can construct an algorithm $A(\epsilon, \delta)$ such that:
\begin{itemize}
    \item $A(\epsilon, \delta)$ invokes $A$ and $A^{-1}$ $O(\frac{1}{\sqrt{\epsilon}}\log \frac{1}{\delta})$ times;
    \item If $A$ succeeds with probability at least $\epsilon$, $A(\epsilon, \delta)$ succeeds with probability at least $1-\delta$.
\end{itemize}
To achieve this, we first note that performing $A(k)$
for a randomly chosen $k\in\{1, \ldots, M\}$ for
an appropriate $M=O\left(\frac{1}{\sqrt{\epsilon}}\right)$ and measuring the final state gives a success probability that is close to 1/2 (as observed in the proof of Theorem 3 in \cite{BHMT02}). 
Repeating this procedure 
$O\left(\log \frac{1}{\delta}\right)$ times
achieves the success probability of at least $1-\delta$.

\subsection{Algorithm with a fixed number of stages}
\label{sec:known}

Now we present an informal overview of the algorithm when tight bounds on the number of solutions $m=|i:x_i=1|$ is known.
We will define a sequence of times $T_1, T_2, \ldots$ and procedures $A_1, A_2, \ldots$. We choose $T_1= 3 \sqrt{T/n}$ (this ensures that at most $n/9$ of indices $i\in[n]$ have $t_i\geq T_1$) and $T_2=3 T_1$, $T_3 = 3 T_2$, $\ldots$ until $d$ for which $T_d \geq \sqrt{T}$. The procedure $A_1$ creates the superposition  $\sum_{i=1}^n \frac{1}{\sqrt{n}} \ket{i}$ and runs the checking procedure $C_{T_1}$, obtaining state of the form
$\sum_{i=1}^n \frac{1}{\sqrt{n}} \ket{i, a_i}$,
where $a_i\in\{0, 1, *\}$, with $*$ denoting a computation that did not terminate. The subsequent procedures $A_j$ are defined as $A_j = C_{T_j} A_{j-1}(1)$, i.e., we first amplify the parts of the state with outcomes  1 or $*$ and then run the checking procedure $C_{T_j}$.

We express the final state of $A_{j-1}$ as  
\[ 
\sin \alpha_{j-1} \ket{\psi_{\text{succ}}} + \cos \alpha_{j-1} \ket{\psi_{\text{fail}}} ,
\]
where $\ket{\psi_{\text{succ}}} $ consists of those indices $i\in [n]$ which are either 1 or   are still unresolved $*$ (and thus have the potential to turn out to be `1').
Then  the amplitude amplification part  triples the angle $\alpha_{j-1}$, i.e., amplifies both the `good' and `unresolved' states by a factor of $\sin(3\alpha_{j-1})/\sin(\alpha_{j-1}) \approx 3$. We will show that  $\ell = \lceil \log_9 \frac{n}{m} \rceil$ stages are sufficient, i.e., the procedure $A_{\ell}$  the amplitude at the `good' states (if they exist) is sufficiently large.

We note that the idea of recursive tripling via amplitude amplification has been used in other contexts. It has been used to build an algorithm for bounded-error search in \cite{Hoyer2003quantum}; more recently, the recursive tripling trick has also been used in, e.g., \cite{chakraborty_et_al:LIPIcs.STACS.2022.20}. Furthermore, the repeated tripling of  the angle $\alpha$ also explains the scaling factor 3 when defining the sequence $T_1, T_2, T_3 \ldots$

A formal description follows.

We assume  an estimate $T \geq \sum_i t_i^2$ to be known and set
\begin{equation*}
	T_1= 3 \sqrt{T/n}, \  T_2 = 3T_1, \ \ldots, T_d = 3 T_{d-1},
\end{equation*}
with $ d \in \mbb N $ s.t.~$ T_{d-1 } < \sqrt T \leq T_d $  (equivalently, $ 9^{d-1} <  n \leq 9^d $). 

Let $ \mc M = \lrb{i\in[n] : x_i=1 }$, $m= \lrv{\mc M}$. 
We assume that we know $\ell$ for which $m$ belongs to the interval $\left [\frac{n}{9^\ell}, \frac{n}{9^{\ell-1}} \right)$ 
(so that $\ell = \lceil \log_9 \frac{n}{m} \rceil$).

Under those assumptions, we now describe a variable time search algorithm with parameters $T, \ell$.

\begin{algorithm}[H]
	\caption{VTS algorithm with a fixed number of stages
	}\label{alg:alg1}
	\begin{algorithmic}[1]
    \Statex {\bf Parameters: }{$ T $, $ n $, $\ell$, $\delta$.}
      \State{Run the amplified algorithm $A(0.04, \delta)$ where $A$ is the procedure defined below and we amplify the part of the state for the second register contains `1' 
      }
      \Procedure{$A$}{}
      \State{Run $A_\ell$}\Comment{(defined below)}
      \State{Run $C_{T_{\ell+1}}$ (or $C_{T_d}$ if $\ell=d$)}
      \EndProcedure
      \State Measure the state  
      \If {The second register is `1'}
        \State Output $i$ from the first register
        \Else 
        \State Output \texttt{No solutions.}
      \EndIf 
      \Procedure{$A_j$}{}\Comment{$j \in [d]$}
      \If{$j=1$}
        \State  Create the state $\sum_{i=1}^n \frac{1}{\sqrt{n}} \ket{i}$ 
        \State Run $C_{T_1}$, obtaining state of the form
        $\sum_{i=1}^n \frac{1}{\sqrt{n}} \ket{i, a_i}  $ 
where $a_i\in\{0, 1, *\}$.
      \Else
      \State Perform the amplified algorithm $A_{j-1}(1)$,
      amplifying the basis states with 1 or * in the second register 
      \If{$j<\ell$}
      \State  Run $C_{T_j}$.
      \EndIf 
      \EndIf
      \EndProcedure
  \end{algorithmic}
\end{algorithm}

\begin{lemma}\label{thm:proofOfAlg1} 
	\cref{alg:alg1}  with parameter $\ell = \lceil \log_9 \frac{n}{m} \rceil$ finds an index   $ i  \in \mc M$ with probability at least $1-\delta$ in time   $ O \lr{   \sqrt {\frac{T}{m}} 	\log\frac{n}{m} \log \frac{1}{\delta} } $.
\end{lemma}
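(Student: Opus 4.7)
The plan is to prove \cref{thm:proofOfAlg1} in three parts: a structural invariant for the intermediate states, a lower bound on the inner success probability that exploits the constraint $\sum_i t_i^2 \leq T$, and a straightforward complexity recurrence.

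First, I would show by induction on $j$ that $A_j\ket{0} = \sin\alpha_j \ket{g_j} + \cos\alpha_j \ket{b_j}$, where $\ket{g_j}$ is a superposition with \emph{uniform} amplitude on the index register over $S_j := \{i : x_i = 1 \text{ or } t_i > T_j\}$, with the good subspace characterized by the $C_{T_j}$-flag being $1$ or $*$. The base case is immediate from $A_1$ producing $\sum_i \frac{1}{\sqrt n}\ket{i,a_i}$. For the induction, $A_{j-1}(1)$ acts as a rotation in the two-dimensional good/bad subspace of $A_{j-1}$ and hence preserves the uniform relative amplitudes inside $\ket{g_{j-1}}$, while $C_{T_j}$ only updates the new flag register without modifying index amplitudes; the new good subspace is $S_j \subseteq S_{j-1}$. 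This yields the recursion
\[
\sin^2\alpha_j \;=\; \tfrac{|S_j|}{|S_{j-1}|}\sin^2(3\alpha_{j-1}), \qquad \sin^2 \alpha_1 = |S_1|/n ,
\]
and unrolling, writing $x_j := \sin^2\alpha_j$, gives that each solution $i \in \mc{M} \subseteq S_{\ell-1}$ carries the same amplitude $\frac{1}{\sqrt n}\prod_{j=1}^{\ell-1}(3-4x_j)$ after $A_\ell$ (which omits the trailing checker). Consequently,
\[
\Pr[\text{success}] \;=\; \frac{m}{n}\prod_{j=1}^{\ell-1}(3-4x_j)^2 \;=\; 9^{\ell-1}\frac{m}{n}\prod_{j=1}^{\ell-1}\!\left(1 - \tfrac{4x_j}{3}\right)^{\!2}.
\]

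The main obstacle is lower-bounding this by a positive constant (specifically $\geq 0.04$). The prefactor $9^{\ell-1}m/n \geq 1/9$ follows directly from $\ell = \lceil\log_9(n/m)\rceil$. Using only Markov on $\sum t_i^2 \leq T$ gives $u_j := |\{i : t_i > T_j\}| \leq n/9^j$ and hence $x_j \leq 9^{j-1}m/n + 9^{j-1}u_j/n \leq 2/9$, but a naive term-by-term bound on the product would only yield $(19/27)^{2(\ell-1)}$, which vanishes exponentially in $\ell$. The resolution is a global budget argument: since each index with $t_i > T_{j-1}$ contributes at least $T_{j-1}^2 = 9^{j-1}T/n$ to $\sum_i t_i^2 \leq T$, a summation-by-parts on $\sum_{j \geq 2}(u_{j-1}-u_j)T_{j-1}^2 \leq T$ yields $\sum_{j\geq 1}u_j\cdot 9^{j-1} \leq n/8$. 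Combined with $\sum_{j=1}^{\ell-1}9^{j-1}m/n \leq 1/8$ (which uses $9^{\ell-1}m/n \leq 1$), this gives $\sum_{j=1}^{\ell-1} x_j \leq 1/4$. Applying $\log(1-u) \geq -u-u^2$ (valid for $u \leq 1/2$) term by term then yields $\prod_{j=1}^{\ell-1}(1-4x_j/3)^2 = \Omega(1)$, so $\Pr[\text{success}] \geq 0.04$ with a small positive margin.

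The outer amplification $A(0.04,\delta)$ then boosts this to $1-\delta$ using $O(\log(1/\delta))$ invocations of the inner procedure. For complexity, a simple induction using the fact that $A_{j-1}(1)$ comprises three applications of $A_{j-1}^{\pm 1}$ plus one checker call $C_{T_j}$ of cost $T_j$ shows that the running time of $A_j$ is $j\cdot 3^j\sqrt{T/n}$ for $j < \ell$. Since $A_\ell$ omits the trailing checker, its running time is $(\ell-1)\cdot 3^\ell\sqrt{T/n}$, and the final $C_{T_{\ell+1}}$ (or $C_{T_d}$) costs $O(3^\ell\sqrt{T/n})$, giving per-invocation cost $O(\ell\cdot 3^\ell\sqrt{T/n})$. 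Substituting $3^\ell = \Theta(\sqrt{n/m})$ yields $O(\log(n/m)\sqrt{T/m})$, and multiplying by the outer $\log(1/\delta)$ factor gives the claimed bound.
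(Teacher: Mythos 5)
Your overall structure—the uniformity/recursion lemma, the global budget bound replacing a naive term-by-term estimate, and the complexity recurrence—matches the paper's approach in substance. The budget argument is formulated a bit differently: you bound $\sum_j u_j\,9^{j-1}\leq n/8$ for the unresolved counts $u_j=|\{i:t_i>T_j\}|$ and $\sum_{j<\ell} 9^{j-1}m/n\leq 1/8$ separately, then use $\log(1-u)\geq -u-u^2$; the paper instead bounds $\sum_{j<\ell} s_j\,9^j<\tfrac{9n}{4}$ directly (with $s_j=|S_j|$, noting $s_j\leq u_j+m$) and applies the generalized Bernoulli inequality. These are essentially equivalent, and your summation-by-parts derivation checks out.

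However, there is one genuine gap. You write
\[
\Pr[\text{success}] \;=\; \frac{m}{n}\prod_{j=1}^{\ell-1}(3-4x_j)^2,
\]
as an equality. This is only an upper bound, and in general it is \emph{not} the success probability. The procedure $A$ runs $A_\ell$ and then the final checker $C_{T_{\ell+1}}$ (or $C_{T_d}$ if $\ell=d$), and the algorithm declares success only when it measures a $1$ in the flag register. An index $i\in\mathcal M$ with $t_i>T_{\ell+1}$ still carries the flag $*$ after the final checker and does not contribute to the success probability. For $\ell<d$ such indices may exist, so the true probability is
\[
\Pr[\text{success}] \;=\; \frac{|\{i\in\mathcal M : t_i\leq T_{\ell+1}\}|}{n}\prod_{j=1}^{\ell-1}(3-4x_j)^2,
\]
and your formula overcounts by replacing that cardinality with $m$. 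The missing ingredient is a Markov-style argument that $T_{\ell+1}^2\geq 9T/m$ together with $\sum_i t_i^2\leq T$ implies that fewer than $m/9$ indices in $[n]$ (in particular in $\mathcal M$) have $t_i>T_{\ell+1}$, so at least $8m/9$ solutions are resolved by the final checker. Without this step the claimed lower bound on the success probability does not follow. Fortuitously there is enough slack that the final conclusion ($>0.04$) still holds with the $8/9$ factor inserted, but as written the derivation is incorrect.

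The rest is fine: the per-call running-time recurrence $j\cdot 3^j\sqrt{T/n}$ and the substitution $3^\ell=\Theta(\sqrt{n/m})$ match the paper's $(3+\ell)T_\ell$ computation, and the outer $O(\log(1/\delta))$ amplification is standard.
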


\begin{proof}
	
	By $\mc S_j$ we denote the sets of those indices whose amplitudes  will be amplified   after running $A_j$, namely, the set of indices for which the query either returns a positive result or does not finish in time $T_j$:
	\begin{equation*}
		\mc S_j = 	\lrb{ i \in [n] \ : \   \lr{T_j < t_i} \lor \lr{t_i \leq T_j \land 
				x_i=1
			}  
		} ,
		\quad j=0,1,2,\ldots, d,
	\end{equation*}
	where $T_0 := 0$. We note that the sets $\mc S_j$ form a decreasing sequence\footnote{Since   each  $i$ s.t.~$t_i \leq T_j \land  x_i=1$ either satisfies $t_i \leq T_{j-1} \land x_i=1$ or $t_i > T_{j-1}$; in both cases $i \in \mc S_{j-1}$.},
	i.e., 
	\begin{equation*}
		[n] = 
		\mc S_0 \supseteq \mc S_1 \supseteq \mc S_2 \supseteq \ldots \supseteq 
		\mc S_{d-1} \supseteq \mc S_d = \mc M.
	\end{equation*}
	We shall denote the cardinality of $\mc S_j $ by $s_j$; then 
	\begin{equation*}
		n = s_0  \geq s_1 \geq \ldots \geq s_d = m.
	\end{equation*}
	
	We express the final state of $A_j$ as 
\[ \sin \alpha_j \ket{\psi_{\text{succ}, j}} + \cos \alpha_j \ket{\psi_{\text{fail}, j}} \]
where $\ket{\psi_{\text{succ}, j}}$ consists of basis states with $\ket{i}$, $i\in \mc  S_j$,
in the first register and $\ket{\psi_{\text{fail}, j}}$ consists of basis states with $\ket{i}$, $i\notin \mc S_j$, in the first register.

	We begin by describing how the cardinality of $\mc S_j$ is related to the amplitude $\sin \alpha_j$ (the proof is deferred  to \cref{sec:appA}).
	
	\begin{restatable}{lemma}{recursiveamplitude}\label{thm:recursiveamplitude}
		For all $j=1,2,\ldots, \ell$, 
		\begin{equation}\label{eq:nextSinJ}
			\sin^2 \alpha_{j} 
			=
			\frac{s_{j}}{n}
			\prod_{k=1}^{j-1}  \lr{\frac{\sin (3\alpha_{k})}{\sin\alpha_{k}}}^2.
		\end{equation}
		Moreover, for any $i \in \mc S_j$, the amplitude at $\ket{i,1}$  (or $\ket{i,*}$, if $t_i > T_j$)  equals $ \frac{\sin \alpha_j}{\sqrt{s_j}} $.
	\end{restatable}
	
	\cref{eq:nextSinJ} and the trigonometric identity
	\begin{equation*}
		\sin(3\alpha) 
		=
		(3-4\sin^2\alpha)\sin \alpha
	\end{equation*}
	allows 
	to obtain (for $j=1,2,\ldots,\ell$)
	\begin{equation}\label{eq:sinRatioLB}
		\frac{\sin (3\alpha_{j}) }{\sin\alpha_{j} }
		=
		3-4\sin^2\alpha_{j}
		=
		3-
		\frac{4s_j \cdot 9^{j-1} }{n}
		\prod_{k=1}^{j-1}  
		\lr{\frac{\sin (3\alpha_{k})}{3\sin\alpha_{k}}}^2
		\geq 
		3- \frac{4s_j  }{n} \cdot 9^{j-1},
	\end{equation}
	where the inequality is justified by the observation $\lrv{\frac{\sin (3\alpha)}{3\sin \alpha}} \leq 1 $. This allows to estimate 
	\begin{equation}\label{eq:sinAlphaL}
		\sin \alpha_\ell
		=
		\sqrt{\frac{s_\ell}{n}}
		\prod_{j=1}^{\ell-1}
		\frac{\sin (3\alpha_{j})}{\sin\alpha_{j}}
		\geq 
		3^{\ell-1} \sqrt{\frac{s_\ell}{n}}
		\prod_{j=1}^{\ell-1}
		\lr{ 1 - \frac{4 s_j}{27 n} \cdot 9^j },
	\end{equation}
	as long as each factor on the RHS is positive.
		We argue that it is indeed the case; moreover, the whole product is lower-bounded by a constant  (the proof is deferred  to \cref{sec:appA}):
	\begin{restatable}{lemma}{cOneThree}\label{thm:C1-3}
		\begin{enumerate}[label=\textbf{C-\arabic*}]
			\item\label{enum:1} Each factor on the RHS  of \eqref{eq:sinAlphaL} is   positive: $ \frac{9^j s_j}{n} \leq  \frac{9}{4}   $, thus 
			\begin{equation*}
			  \lr{ 1 - \frac{4 s_j}{27 n} \cdot 9^j } \geq \frac{2}{3},
			  \quad\text{for all }
			  j \in [\ell-1].
			\end{equation*} 
			\item\label{enum:2} The product $\prod_{j=1}^{\ell-1}
			\lr{ 1 - \frac{4 s_j}{27 n} \cdot 9^j }$ is lower bounded by $2/3$.
			\item\label{enum:3} $9^\ell s_\ell \geq 9^\ell s_d \geq  n$.
		\end{enumerate}
	\end{restatable}

	From \eqref{eq:sinAlphaL} and \cref{thm:C1-3}  it is evident that
	$
	\sin \alpha_\ell \geq  \frac{2}{9} \sqrt{ \frac{9^\ell s_\ell}{n} }  \geq \frac{2}{9}
	$.

	However, after running $A_\ell$, there still could be some unresolved indices $i$ with $t_i > T_\ell$ and some of these unresolved indices may correspond to $x_i=0$. Our next argument is that running $C_{T_{\ell+1}}$, i.e., the checking procedure for $3 T_\ell$ steps,  resolves sufficiently many indices in $\mc M$. 
	This argument, however, is necessary only for $\ell < d$; for $\ell=d$, one runs $C_{T_d}$ instead of $C_{T_{\ell+1}}$ and the same estimate \eqref{eq:A1succProbLB} of the success probability applies, with $8m/9$ replaced by $m$. Also notice that in \cref{alg:alg1} we skipped running $C_{T_\ell}$ at the end of $A_\ell$ and immediately proceeded  with running $C_{T_{\ell+1}}$ instead. In the analysis, this detail is omitted for convenience (since it is equivalent to running $C_{T_j}$ at the end of each procedure $A_j$ and additionally running $C_{T_{\ell+1}}$ after $A_\ell$).

	By the choice of $\ell$ we have $\sqrt{\frac{T}{m}} \leq   T_\ell = \sqrt{ \frac{9^\ell T }{n} }$ and $T_{\ell+1} \geq 3 \sqrt{\frac{T}{m}}$. Notice that at most  $m/9$ of the indices $i \in [n]$ can satisfy $t_i^2 >  T_{\ell+1}^2 $ (otherwise, the sum over those indices already exceeds $ \frac{m}{9} \cdot  \frac{9T}{m} = T   $). Consequently, after running the checking procedure $C_{T_{\ell+1}}$, at  least $8m/9$ of the indices in $\mc M$  will be resolved to `1'.
		By \cref{thm:recursiveamplitude}, the amplitude at each of the respective states $\ket{i,1}$ is  equal to $\frac{\sin \alpha_\ell}{\sqrt{s_\ell}}$, therefore the probability to measure `1' in the second register is at least
	\begin{equation}\label{eq:A1succProbLB}
	    \frac{8m}{9} \cdot \frac{\sin^2 \alpha_\ell}{s_\ell}
	    \geq
	    \frac{8m}{9} \cdot
		 {\frac{9^{\ell} }{n}}
		\lr{\frac{1}{3}\prod_{j=1}^{\ell-1}
		\lr{ 1 - \frac{4 s_j}{27 n} \cdot 9^j }}^2
		\geq 
		\frac{8}{9} \cdot \lr{ \frac{ 2}{9}}^2>0.04
		,
	\end{equation}
	where the first inequality follows from \eqref{eq:sinAlphaL} and the second inequality is due to \ref{enum:2} and \ref{enum:3}.

We conclude that   the procedure
$ A$  
finds an index $i \in \mc M$ with probability at least $0.04$; 
its running time  is easily seen to be
\begin{equation*}
	T_{\ell+1} + T_{\ell} + 3 \lr{ T_{\ell-1} +  3 \lr{T_{\ell-2} + \ldots + 3 \lr{T_2 +  3 T_1 }  }  }
	=
	(3+\ell) T_\ell ,
\end{equation*}
which for our choice of $\ell$ is of order
\begin{equation*}
     O \lr{ 
		\log\frac{n}{m} \sqrt {9^\ell \frac{T}{n}}    } 
	=  O \lr{  
		\log\frac{n}{m} \sqrt {\frac{T}{m}}  } .
\end{equation*}
Use $O(\log \frac{1}{\delta})$ rounds  amplitude amplification to amplify the success probability of $ A$  to $1-\delta$, concluding the proof.
\end{proof}

\subsection{Algorithm for the general case}
\label{sec:unknown}

When the cardinality of $\lrv{\mc M}$ is not known in advance, we run \cref{alg:alg1} with increasing values of $\ell$ (which corresponds to exponentially decreasing guesses of  $m$) until either $i : x_i=1$ is found or we conclude that no such $i$ exists.  \cref{alg:alg1} also suffers the `souffl\'{e} problem' \cite{Brassard1997} in which iterating too much (choosing $\ell$ in \cref{alg:alg1} larger than its optimal value) may ``overcook'' the state and decrease the success probability. For this reason,  before running \cref{alg:alg1} with the next value of $\ell$, we re-run it with all the previous values of $\ell$ to ensure that the probability of running \cref{alg:alg1} with too large $\ell$ is small. This ensures that
the algorithm stops in time $ O \lr{   \sqrt {\frac{T}{m}} 	\log\frac{n}{m} } $ with high probability. 
Formally, we make the following claim:
 
\begin{restatable}{lemma}{algIIanalysis}\label{thm:alg2analysis}
    If $\mc M$ is nonempty,  
\cref{alg:alg2}  
finds an index   $ i  \in \mc M$ with probability at least $5/6$  with 
complexity $ O \lr{   \sqrt {\frac{T}{m}} 	\log\frac{n}{m} } $.
If $\mc M$ is empty,  \cref{alg:alg2}  outputs \texttt{No solutions.} with complexity $ O \lr{   \sqrt {T} 	\log n } $.
\end{restatable}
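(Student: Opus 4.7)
The plan is to reduce the analysis to \cref{thm:proofOfAlg1}. Let $\ell^{\star} := \lceil \log_9(n/m) \rceil$ denote the ``correct'' parameter of \cref{alg:alg1} when $\mc M \neq \emptyset$. I would argue that \cref{alg:alg2} invokes \cref{alg:alg1} with parameter $\ell^{\star}$ at least once (with high probability at stopping time), while the remaining inner and outer calls contribute only a constant-factor overhead.

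For correctness in the empty case $\mc M = \emptyset$, the argument is immediate: since each query $Q_i$ is perfect, no measurement can produce outcome `$1$' in the second register, so \cref{alg:alg2} never outputs a false positive and correctly returns \texttt{No solutions.} after exhausting $\ell = d$. For $\mc M \neq \emptyset$, once the outer loop index reaches $\ell = \ell^{\star}$, the inner re-run loop with $\ell' \leq \ell$ invokes \cref{alg:alg1} with the correct parameter $\ell^{\star}$; by \cref{thm:proofOfAlg1} with per-call failure probability fixed at $\delta = 1/6$, this already suffices to succeed with probability at least $5/6$, and any later outer iterations only add additional opportunities.

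For the complexity, \cref{thm:proofOfAlg1} gives that one call to \cref{alg:alg1} with parameter $\ell'$ and constant failure costs $O(\ell' \cdot 3^{\ell'}\sqrt{T/n})$. The inner re-run loop at outer stage $\ell$ sums geometrically to $O(\ell \cdot 3^{\ell}\sqrt{T/n})$. In the nonempty case, the probability of reaching outer stage $\ell^{\star}+k$ is at most $\delta^{k}$, since one must have failed $k$ independent parameter-$\ell^{\star}$ calls at outer stages $\ell^{\star},\ldots,\ell^{\star}+k-1$. Using the $\sum p_iT_i$ definition of complexity, the expected total cost becomes
\[ O\lr{\sqrt{T/n}} \cdot \lr{ \sum_{\ell=1}^{\ell^{\star}} \ell \cdot 3^{\ell} + \sum_{k\geq 1} \delta^{k}\,(\ell^{\star}+k) \cdot 3^{\ell^{\star}+k} } = O\lr{\sqrt{T/m}\,\log(n/m)}, \]
both sums being dominated by the $\ell^{\star}$ term once $3\delta < 1$. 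In the empty case the algorithm runs through every outer stage up to $d$, yielding $O(d \cdot 3^{d}\sqrt{T/n}) = O(\sqrt{T}\,\log n)$ from the defining condition $T_{d-1}<\sqrt{T}\leq T_d$.

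The main technical point, which the paper flags via the souffl\'{e} analogy, is controlling the expected cost contribution from the outer stages $\ell > \ell^{\star}$: their per-stage cost grows by a factor of $3$, so the per-call failure probability $\delta$ must be kept strictly below $1/3$ for the tail geometric sum to converge back into $O(\sqrt{T/m}\log(n/m))$. The simple choice $\delta = 1/6$ simultaneously secures the target success probability $\geq 5/6$ and the stated expected complexity.
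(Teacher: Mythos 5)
Your proposal is correct and follows essentially the same approach as the paper: bound each outer stage's cost by $\Theta(j\,3^j\sqrt{T/n})$, observe that every outer stage $j \geq \ell^\star$ contains a fresh call to $B_{\ell^\star}$ so the probability of reaching stage $\ell^\star+k$ decays as $\delta^k$, and exploit $3\delta<1$ to make the tail geometric sum converge to $O(\ell^\star 3^{\ell^\star}\sqrt{T/n})$. The paper slices the expectation into ``batches'' ending at each $B_\ell$ call rather than by outer stage index, but this is only a bookkeeping difference and both yield the same bound.
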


\begin{algorithm}[H]
	\caption{VTS algorithm for arbitrary number of solutions $m$  
	}\label{alg:alg2}
  \begin{algorithmic}[1]
    \Statex {\bf Parameters: }{$ T $, $ n $.}
    \Statex Let $B_k $ stand for \cref{alg:alg1} with parameters $T$, $n$, $k$ and $\delta = 1/6$.
      \For{$j= 1,2,\ldots,d$}
          \For{$k = 1,2,\ldots,j$} 
          \State Run $B_k$
            \State If $B_k$  
            returned $i\in \mc M$, output this $i$ and quit
          \EndFor
      \EndFor
      \State {Output \texttt{No solutions.}}
  \end{algorithmic}
\end{algorithm}

\begin{proof}[Proof of \cref{thm:alg2analysis}]
Let $\delta=1/6$; let us remark that each procedure $B_k$ runs in time $O\lr{k 3^k \sqrt{T/n}}$.

Let us consider the case when $ m=\lrv{\mc M}>0 $;  denote  $\ell := \lceil \log_9 \frac{n}{m} \rceil$. The probability of $ B_k$, $k \neq \ell$, finding an index $i \in \mc M$ is lower-bounded by 0; the probability of $ B_{\ell} $ finding an index $i \in \mc M$ is lower-bounded by $1-\delta$.

Hence, the total complexity of the algorithm stages $j=1,2,\ldots, \ell$, is of order 
\begin{equation*}
    \sqrt{\frac{T}{n}}\sum_{j=1}^\ell \sum_{k=1}^j k 3^k 
    =
    \sqrt{\frac{T}{n}}\sum_{j=1}^\ell (\ell+1-j) j 3^j
    \asymp
    \ell 3^\ell \sqrt{\frac{T}{n}}  .
\end{equation*}
and the last step $B_\ell$ finds $i \in \mc M$ with probability at least $1-\delta$.

With probability at most $\delta$, 
 the last step fails to find $i\in \mc M$, and then \cref{alg:alg2}
 proceeds with $j=\ell+1$ and
 runs the sequence $B_1$, $B_2$, \ldots, $B_\ell$, where the last step finds $i \in \mc M$ with (conditional) probability at least $1-\delta$ (conditioned on the failure to find $i\in \mc M$ in the previous batch).
The complexity of this part is of order
\begin{equation*}
    \delta  \sqrt{\frac{T}{n}} \lr{
    \sum_{j=1}^\ell j 3^j 
    }
    \asymp
     \delta  \sqrt{\frac{T}{n}} 
    \ell 3^\ell ,
\end{equation*}
where the $\delta$ factor reflects the fact the respective procedures are invoked with probability $\delta$.

With (total) probability at most $\delta^2$, 
the algorithm still has not found   $i\in \mc M$. Then \cref{alg:alg2}
 runs the sequence $B_{\ell+1}$, $B_1$, $B_2$, \ldots, $B_\ell$ (i.e., finishes with $j=\ell+1$ and continues with $j=\ell+2$), where the last step finds $i \in \mc M$ with (conditional) probability at least $1-\delta$.
The complexity of this part is of order
\begin{equation*} 
     \delta^2  \sqrt{\frac{T}{n}} \lr{
     (\ell+1)3^{\ell+1}
     +
    \ell 3^\ell
    }
    \asymp 
    \delta^2  \sqrt{\frac{T}{n}} (\ell+1)3^{\ell+1}
    .
\end{equation*}

With (total) probability at most $\delta^3$, 
the algorithm still has not found   $i\in \mc M$. Then \cref{alg:alg2} runs the sequence $B_{\ell+1}$,$B_{\ell+2}$, $B_1$, $B_2$, \ldots, $B_\ell$, where the last step finds $i \in \mc M$ with (conditional) probability at least $1-\delta$.
The complexity of this part is of order
\begin{equation*}
     \delta^3  \sqrt{\frac{T}{n}} \lr{
     (\ell+1)3^{\ell+1}
     +(\ell+2)3^{\ell+2}
     +
    \ell 3^\ell
    }
    \asymp 
    \delta^3  \sqrt{\frac{T}{n}} (\ell+2)3^{\ell+2},
\end{equation*}
and so on.

For $j=d$, the final batch $B_1, B_2, \ldots, B_\ell$  is invoked with probability at most $\delta^{d-\ell}$; with conditional probability at most $\delta$ we still fail to find $i\in \mc M$ and run the remaining sequence $B_{\ell+1}, \ldots, B_d$ (which can completely fail finding any $i\in \mc M$  as it has no non-trivial lower bounds on the success probability). The complexity of the latter sequence is of order
\begin{equation*}
     \delta^{d+1-\ell}  \sqrt{\frac{T}{n}} \lr{
     (\ell+1)3^{\ell+1}
     +(\ell+2)3^{\ell+2}
     +
     \ldots 
     +d 3^d
    }
    \asymp 
    \delta^{d+1-\ell}  \sqrt{\frac{T}{n}} \, d \, 3^{d}.
\end{equation*}

We see that \cref{alg:alg2} fails with probability at most $\delta^{d+1-\ell}$; since $\ell \leq d$, this is upper-bounded by $\delta=1/6$. The total complexity of the algorithm is of order
\begin{align*}
    & 3^\ell \sqrt{\frac{T}{n}}\lr{
    \ell 
    + 3\delta ^2 (\ell +1)
    + 9 \delta^3   (\ell +2)
    + \ldots 
    + (3 \delta)^{d-\ell} \cdot d \delta  
    }
    \\
    &
    <
 3^\ell \sqrt{\frac{T}{n}} \lr{
 \ell 
 + \ell \, 3\delta^2 \sum_{i=0}^\infty(3\delta)^i
 +  3\delta^2 \sum_{i=1}^\infty i (3\delta)^{i-1}
 }
    \\
    &
    =
 3^\ell \sqrt{\frac{T}{n}} \lr{
 \ell 
 + \ell  \frac{3\delta^2 }{1-3\delta}
 +  \frac{3\delta^2 }{(1-3\delta)^2}
 }
 \asymp 
  \ell 3^\ell \sqrt{\frac{T}{n}},
\end{align*}
since $3\delta=1/2$. Since $3^\ell \asymp \sqrt{\frac{n}{m}} $ and $\ell \asymp 	\log\frac{n}{m} $, we conclude that the complexity of the algorithm is  $ O \lr{   \sqrt {\frac{T}{m}} 	\log\frac{n}{m} } $, as claimed.

Let us consider the case when $\mc M $ is empty;  then  
with certainty
each $B_j$ fails to output any $i$, and \cref{alg:alg2} correctly outputs \texttt{No solutions}.  In this case, the complexity of the algorithm is of order
\begin{equation*}
    \sqrt{\frac{T}{n}}\sum_{j=1}^d \sum_{k=1}^j k 3^k 
    =
    \sqrt{\frac{T}{n}}\sum_{j=1}^d (d+1-j) j 3^j
    \asymp
    d 3^d \sqrt{\frac{T}{n}} \asymp \sqrt{T} \log n,
\end{equation*}
since $3^d \asymp \sqrt{n}$.
\end{proof}

\section{Lower bound}

For the improved lower bound, we consider a query problem which can be solved with variable time search.
Let $g : \{0,1,\star\}^m \to \{0,1\}$ be a partial function defined on the strings with exactly one non-$\star$ value, which is the value of the function.
The function $f$ we examine then is the composition of $\text{OR}_n$ with $g$. We note that $g$ is also known in the literature as $\text{pSEARCH}$, which has been used for quantum lower bounds in cryptographic applications \cite{BBHKLS18}.

For any $i \in [n]$, if the index of the non-$\star$ element in the corresponding instance of $g$ is $j_i \in [m]$, then we can
find this value in $O(\sqrt{j_i})$ queries using Grover's search.
This creates an instance of the variable search problem with unknown times $t_i = \sqrt{j_i}$.
By examining only inputs with fixed $T = \sum_{i=1}^n t_i^2 = \sum_{i=1}^n j_i$ and the restriction of $f$ on these inputs $f_T$, we are able to prove a $\Omega(\sqrt{T \log T})$ query lower bound using the weighted quantum adversary bound \cite{Aar06}.
Since any quantum algorithm for the variable time search also solves $f_T$, this gives the required lower bound.

\begin{restatable}{theorem}{thmLB}\label{thm:lb}
Any algorithm that solves variable time search with unknown times $t_i$ requires time $\Omega(\sqrt{T \log T})$, where $T = \sum_{i\in[n]} t_i^2$.
\end{restatable}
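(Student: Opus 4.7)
The plan has two stages: first reduce the variable time search lower bound to a quantum query lower bound for the composed decision problem $f_T = \text{OR}_n \circ g$ restricted to inputs with $\sum_i j_i = T$, and then apply the weighted quantum adversary bound of \cite{Aar06} to establish $Q(f_T) = \Omega(\sqrt{T \log T})$.

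For the reduction, given any input $(s_1, \ldots, s_n) \in (\{0,1,\star\}^m)^n$ in the domain of $f_T$, I would implement each checking procedure $Q_i$ by doubling Grover search on $s_i$: run Grover on prefixes of $s_i$ of geometrically growing sizes until the unique non-star at position $j_i$ is found, which terminates after $O(\sqrt{j_i})$ queries to $s_i$. This gives $t_i = \Theta(\sqrt{j_i})$ and $\sum_i t_i^2 = \Theta(T)$, so any variable time search algorithm of complexity $C$ yields a quantum query algorithm for $f_T$ using $O(C)$ oracle queries. Hence it suffices to lower bound $Q(f_T)$.

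For the adversary, I would take $X$ to be the set of 1-inputs (exactly one non-star has value $1$) and $Y$ the set of 0-inputs (all non-star values are $0$), both restricted to sequences with $\sum_i j_i = T$. I would then construct a weight matrix $\Gamma$ on $X \times Y$ in which the weights are spread across the $\Theta(\log T)$ dyadic scales at which the non-star positions $j_i$ can lie. Each nonzero entry of $\Gamma$ would correspond to a pair $(x, y)$ that agrees on most coordinates but differs by a localized modification: one non-star value flipped from $0$ to $1$ with a possible relocation of its position, accompanied by a compensating relocation in a second string's non-star position so as to preserve the sum constraint.

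The main obstacle is choosing the scale-indexed weights so that Aaronson's bound $Q(f_T) \geq \|\Gamma\|/\max_{(i,k)} \|\Gamma_{(i,k)}\|$ yields the desired $\sqrt{T \log T}$. The different scales must contribute constructively to the global spectral norm $\|\Gamma\|$, producing the extra $\sqrt{\log T}$ factor over the naive $\sqrt{T}$, while each single input coordinate $(i, k)$ should only participate in pairs at one scale, keeping $\max_{(i,k)} \|\Gamma_{(i,k)}\| = O(1)$. I would exploit the strong permutation symmetries of $\Gamma$, over the string indices $i$ and over the positions within each string, to block-diagonalize both $\Gamma$ and each $\Gamma_{(i,k)}$ in a suitable basis, reducing the spectral norm computations to combinatorial per-scale counts.
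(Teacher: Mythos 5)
Your reduction step and the general plan are correct: replace each row by a prefix-doubling Grover search so that $t_i = \Theta(\sqrt{j_i})$, fix the total $T = \sum_i j_i$, and lower bound the query complexity of the restricted composed function $f_T$ via an adversary argument in which the extra $\sqrt{\log T}$ comes from spreading an adversary relation across $\Theta(\log T)$ dyadic scales. Your description of the relation---pairs $(x,y)$ that differ by moving the single $1$-valued non-$\star$ entry together with a compensating relocation in one other row so that $\sum_i j_i$ is preserved---is essentially the swap structure the paper uses. So the skeleton is right.

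The gap is that everything after that is a plan, not a proof, and it is precisely the part where the $\sqrt{\log T}$ has to actually materialize. You never fix the input distribution over scales, and that choice is load-bearing: the paper takes exactly $d/2^p$ rows whose non-$\star$ lies at scale $p$ for $p \in [0,k/2]$ with $d = 2^k$ and $T = \Theta(d\log d)$, and splits scales into \emph{light} ($p \leq k/4$) and \emph{heavy} ($p > k/4$), forcing the unique $1$ to sit in a heavy block and pairing it against a light block. Without this asymmetry, the numerator $w(x)w(y)$ and the worst denominator $w(x,i,j)w(y,i,j)$ end up in the same order and the $\log$ gain disappears. You also do not specify the weight function; the paper takes $R(x,y)=2^{p_0}2^{p_1}$ (product of the two scales involved in the swap), and the case analysis over the four differing positions hinges on this exact form to get all four denominators down to $O(d^2 k\, 2^{p_1})$ against a numerator of $\Theta(d^3 k^3 2^{p_1})$. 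Your stated hope that ``each input coordinate participates at only one scale, keeping $\max_{(i,k)}\|\Gamma_{(i,k)}\|=O(1)$'' is not how the bound is balanced: coordinates participate at many scales, and what saves the day is the quantitative relationship between $B_p$, $J_p$, and the light/heavy split.

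Finally, you are invoking the spectral form $\|\Gamma\|/\max\|\Gamma_{(i,k)}\|$ and proposing to block-diagonalize by symmetry. This is a legitimate alternative route in principle (the spectral and relational adversaries coincide for nonnegative weights), but it is strictly harder to execute here, because the sum constraint $\sum_i j_i = T$ breaks the product symmetry you would need for a clean block-diagonalization, and you would still have to discover the light/heavy structure to make the spectral ratio large. The paper sidesteps all of this by using Aaronson's relational formulation
\[
\Q(f_T) = \Omega\Biggl(\min_{\substack{R(x,y)>0,\ x_{i,j}\neq y_{i,j}}} \sqrt{\frac{w(x)w(y)}{w(x,i,j)w(y,i,j)}}\Biggr),
\]
which reduces the whole argument to a finite case analysis over the four cells where $x$ and $y$ disagree. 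I would recommend switching to that form and then actually committing to a concrete input ensemble and weight; as written, the proposal identifies the right target but does not yet contain a proof.
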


We note that the lower bound of Theorem \ref{thm:lb} contains a factor of $\sqrt{\log T}$ while the upper bound of Lemma \ref{thm:alg2analysis} contains a factor of $\log n$.
There is no contradiction between these two results as the lower bound uses inputs with
$T=\Theta(n \log n)$ and for those inputs $\log T= (1+o(1)) \log n$.

\begin{proof}[Proof of Theorem \ref{thm:lb}]

Consider a partial function $f : D \to \{0,1\}$, where $D \subset \{\star,0,1\}^{n\times m}$, defined as follows.
An input $x \in D$ if for each $i \in [n]$ there is a unique $j \in [m]$ such that $x_{i,j} \neq \star$; denote this $j$ by $j_{x,i}$.
Then $f(x) = 1$ iff there exists an $i$ such that $x_{i,{j_{x,i}}} = 1$.

Suppose that $x$ is given by query access to $x_{i,j}$.
For any $i$, we can check whether $x_{i,j_{x,i}} = 1$ in $O(\sqrt{j_{x,i}})$ queries with certainty in the following way.
There is a version of Grover's search that detects a marked element out of $N$ elements in $O(\sqrt N)$ queries with certainty, if the number of marked elements is either $0$ or $1$ \cite{BHMT02}.
By running this algorithm for the first $N$ elements, where we iterate over $N = 1, 2, \ldots, 2^{\lceil \log_2{j_{x,i}}\rceil}$, we will detect whether $x_{i,j_{x,i}} = 1$ in $O(\sqrt{j_{x,i}})$ queries with certainty.

Letting $t_i = \sqrt{j_{x,i}}$ and $T = \sum_{i \in [n]} t_i^2$, we get an instance of a variable search problem.
Now fix any value of $T$ and examine only inputs with such $T$.
Denote $f$ restricted on $T$ by $f_T$.
If the quantum query complexity of $f_T$ is $\Q(f_T)$, then any algorithm that solves variable time search must require at least $\Omega(\Q(f_T))$ time.
In the following, we will prove that
there are $n=\Theta\lr{\frac{T}{\log T}}$ and 
$m=\Theta\lr{\sqrt{\frac{T}{\log T}}}$ for which
$\Q(f_T) = \Omega(\sqrt{T \log T})$.

\paragraph{Adversary bound.} We will use the relational version of the quantum adversary bound \cite{Aar06}.
Let $X \subseteq f_T^{-1}(0)$ and $Y \subseteq f_T^{-1}(1)$ and $R : X \times Y \to \mathbb R_{\geq 0}$ be a weight function.
For any input $x \in X$, define $w(x) = \sum_{y \in Y} R(x,y)$ and for any $i \in [n]$, $j \in [m]$, define $w(x,i,j) = \sum_{y \in Y, x_{i,j} \neq y_{i,j}} R(x,y)$.
Similarly define $w(y)$ and $w(y,i,j)$.
Then
\[\Q(f_T) = \Omega\Biggl(\min_{\substack{x \in x, y \in Y \\ i \in [n], j \in [m] \\ R(x,y) > 0 \\ x_{i,j} \neq y_{i,j}}} \sqrt{\frac{w(x)w(y)}{w(x,i,j)w(y,i,j)}}\Biggr).\]

\paragraph{Input sets.} Here we define the subsets of inputs $X$ and $Y$.
First, let $k$ be the smallest positive integer such that $T \leq 2^k k$ and $k$ is a multiple of $4$.
Denote $d = 2^k$, then $k = \log_2 d$ and $T = \Theta(d \log d)$.
An input $z$ from either $X$ or $Y$ must then satisfy the following conditions.
\begin{itemize}
    \item for each $p \in \left[0,\frac{k}{2}\right]$, there are exactly $\frac{d}{2^p}$ indices $i$ such that $j_{z,i} \in [2^p,2^{p+1})$; we will call the set of such indices the \emph{$p$-th block} of $z$;
    \item moreover, for each $p$ and each $\ell \in [0,2^p)$, there are exactly $\frac{d}{2^{2p}}$ indices $i$ such that $j_{z,i} = 2^p+\ell$.
\end{itemize}
Consequently, we examine inputs with $n = 2^k + 2^{k-1} + \ldots 2^{\frac{k}{2}}$ and $m = 2^{\frac{k}{2}+1}-1$.
Additionally, an input $y$ belongs to $Y$ only if there is a unique $i$ such that $y_{i,j_{y,i}} = 1$.
For this $i$, we also require $j_{y,i} \geq 2^{\frac{k}{4}+1}$: equivalently this means that $i$ belongs to a block with $p > \frac{k}{4}$.

We verify the value of $T' = \sum_{i\in[n]} t_i^2$ for these inputs.
If $i$ belongs to the $p$-th block of an input $z$, then $j_{z,i} = \Theta(2^p)$, as $j_{z,i} \in [2^p, 2^{p+1})$.
Then
\[T' = \sum_{i\in[n]} t_i^2 = \sum_{i\in[n]} j_{z,i} = \sum_{p \in \left[0,\frac{k}{2}\right]} \frac{d}{2^p}\cdot 2^p = d \left(\frac{k}{2}+1\right) = \Theta(T).\]
Note that since $\Q(f_{T'}) \leq \Q(f_T)$, a lower bound on $\Q(f_{T'})$ in terms of $T$ will also give us a lower bound on $\Q(f_T)$.
In the remainder of the proof, we will thus lower bound $\Q(f_{T'})$.

\paragraph{Relation.}

For an index $i \in [n]$ of an input $z$ that belongs to the $p$-th block, we define an \emph{index weight} $W_{z,i} = 2^p$.
Then we also define values
\begin{itemize}
    \item $B_p = \frac{d}{2^p} \cdot 2^p = d$ is the total index weight of the $p$-th block;
    \item $J_p = \frac{d}{2^{2p}} \cdot 2^p = \frac{d}{2^p}$ is the total index weight in the $p$-th block for any $j_{z,i} \in [2^p,2^{p+1})$.
\end{itemize}
Note that these values do not depend on the input.

For the relation, we will call the $p$-th block \emph{light} if $p \in \left[ 0, \frac{k}{4}\right]$ and \emph{heavy} if $p \in \left( \frac{k}{4}, \frac{k}{2} \right]$.
Two inputs $x \in X$ and $y \in Y$ have $R(x,y) > 0$ iff:
\begin{itemize}
    \item there are exactly two indices $i_0, i_1 \in [n]$ such that $j_{x,i_b} \neq j_{y,i_b}$;
    \item $i_0$ is from some light block $p_0$ and $i_1$ is from some heavy block $p_1$ of $y$; let $j_0 = j_{y,i_0}$ and $j_1 = j_{y,i_1}$;
    \item $y_{i_0,j_0} = 0$, $y_{i_1,j_1} = 1$.
    \item $x_{i_0,j_1} = x_{i_1,j_0} = 0$.
\end{itemize}
Then let the weight in the relation be
\[R(x,y) = W_{y,i_0}W_{y,i_1} = W_{x,i_1}W_{x,i_0} = 2^{p_0} 2^{p_1}.\]
Figure \ref{fig:relation} illustrates the structure of the inputs and the relation.

\begin{figure}[H]
    \begin{center}
        \includegraphics[scale=0.85]{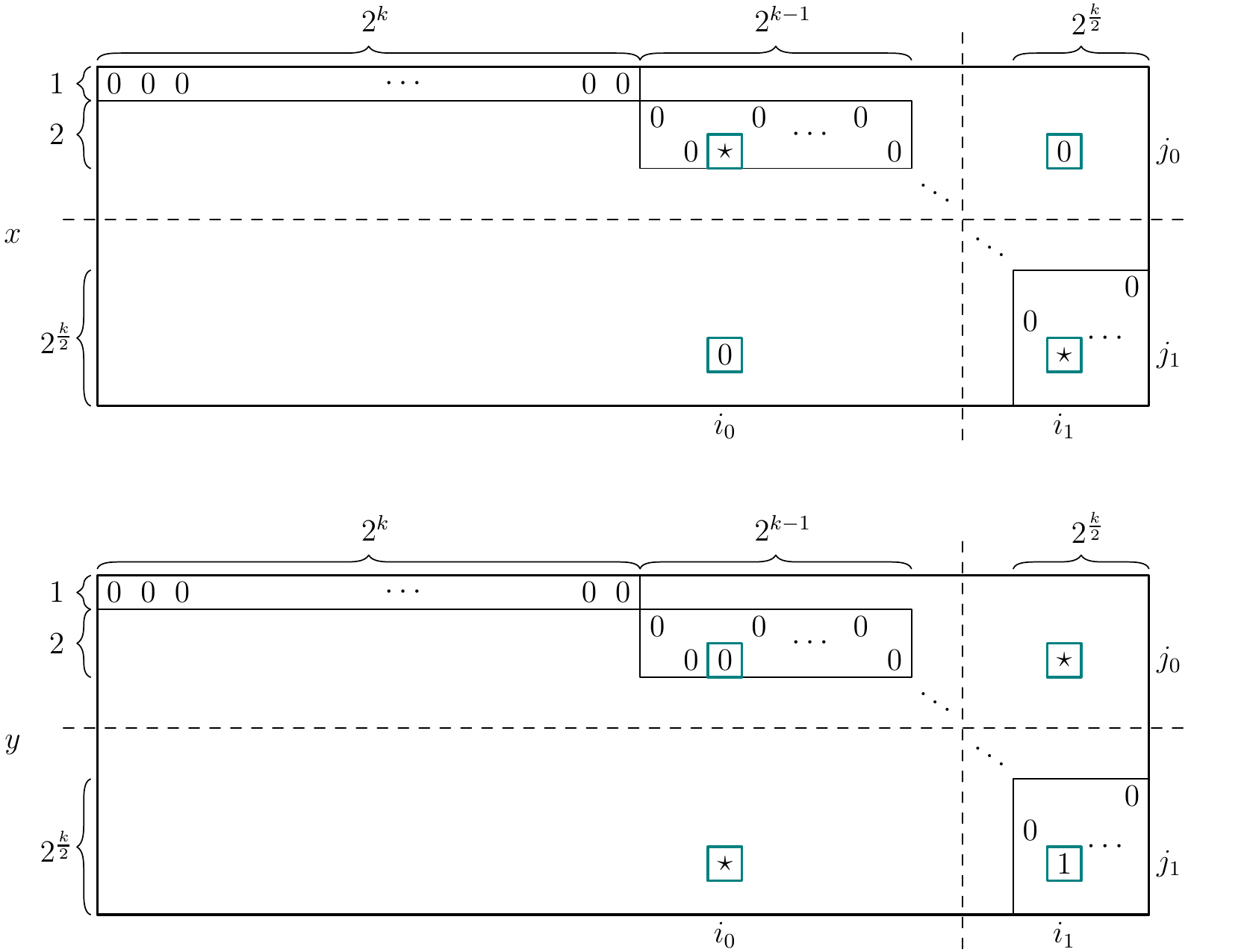}
        \caption{An example of two inputs $x \in X$ and $y \in Y$ in the relation. Inputs $x$ and $y$ differ only in the 4 highlighted positions. All of the empty cells contain $\star$, not shown for readability. For $y$, the non-$\star$ symbols of the light blocks are located in the left upper area separated by the dashed lines, while the non-$\star$ symbols of the heavy blocks are located in the lower right area. Note that for $x$, $i_0$ is in a heavy block and $i_1$ is in a light block.}
        \label{fig:relation}
    \end{center}
\end{figure}

\paragraph{Lower bound.} Now we will calculate the values for the adversary bound.
Fix two inputs $x \in X$ and $y \in Y$ with $R(x,y) > 0$.
First, since for $x$ the index $i_1$ can be any index from any light block and $i_0$ can be any index from any heavy block,
\[w(x)=\Biggl(\sum_{p_0 \in \left[0,\frac{k}{4}\right]} B_{p_0}\Biggr)\cdot\Biggl(\sum_{p_1 \in \left(\frac{k}{4},\frac{k}{2}\right]} B_{p_1}\Biggr) = \Theta(d^2k^2).\]
For $w(y)$, note that $p_1$ is uniquely determined by the position of the unique symbol $1$ in $y$.
However, the choice for $i_0$ is not additionally constrained, hence
\[w(y) = \Biggl(\sum_{p_0 \in \left[0,\frac{k}{4}\right]} B_{p_0}\Biggr)\cdot2^{p_1} = \Theta(dk2^{p_1}).\]
Therefore, the nominator in the ratio in the adversary bound is
\[w(x)w(y) = \Theta(d^3k^32^{p_1}).\]

Now note the following important property: if $x_{i,j} \neq y_{i,j}$, then one of $x_{i,j}$ and $y_{i,j}$ is $\star$, and the other is either $0$ or $1$.
There are in total exactly $4$ positions $(i,j)$ where $x$ and $y$ differ.
We will examine each case separately.
\begin{enumerate}[label=(\alph*)]
    \item $i = i_0$, $j = j_0$.
    In this case $x_{i,j} = \star$ and $y_{i,j} = 0$.
    
    For $x$, $i_1$ is not fixed but $j_0$ is known and hence also $p_0$ is known. Therefore, the total index weight from the light blocks is $J_{p_0}$.
    On the other hand, the positions of $i_0$ and, therefore, also $p_1$ are fixed.
    Thus,
    \[w(x,i,j) = J_{p_0} \cdot 2^{p_1} = \frac{d}{2^{p_0}} \cdot 2^{p_1}.\]
    For $y$, both $i_0$ and $i_1$ are fixed, hence
    \[w(y,i,j) = 2^{p_0} \cdot 2^{p_1} < d,\]
    since $p_0+p_1 \leq \frac{k}{4}+\frac{k}{2} < k$.
    Overall,
    \[w(x,i,j)w(y,i,j) < \frac{d}{2^{p_0}} \cdot 2^{p_1} \cdot d = \frac{d^2\cdot 2^{p_1}}{2^{p_0}}.\]
    
    \item $i = i_0$, $j = j_1$.
    In this case $x_{i,j} = 0$ and $y_{i,j} = \star$.
    
    For $x$, now the position $i_0$ is fixed, but $i_1$ can be chosen without additional constraints.
    The index $i_0$ uniquely defines the value of $p_1$.
    Hence,
    \[w(x,i,j) = \Biggl(\sum_{p_0 \in \left[0,\frac{k}{4}\right]} B_{p_0}\Biggr)\cdot 2^{p_1} = \Theta(dk2^{p_1}).\]
    For $y$, similarly as in the previous case, we have $i_0$ and $i_1$ fixed, thus
    \[w(y,i,j) = 2^{p_0} \cdot 2^{p_1} < d.\]
    Then
    \[w(x,i,j)w(y,i,j) = O(dk2^{p_1} \cdot d) = O(d^2k2^{p_1}).\]
    
    \item $i = i_1$, $j = j_0$.
    In this case $x_{i,j} = 0$ and $y_{i,j} = \star$.
    
    For $x$, $i_1$ is fixed, so it uniquely determines $p_0$.
    The index $i_0$ can be chosen without additional restrictions.
    Hence, 
    \[w(x,i,j) = 2^{p_0}\cdot \Biggl(\sum_{p_1 \in \left(\frac{k}{4},\frac{k}{2}\right]} B_{p_1}\Biggr) = \Theta(2^{p_0}\cdot dk).\]
    For $y$, $i_0$ is not fixed but $j_0$ is fixed, which also fixes $p_0$.
    Therefore, the total index weight from the light blocks is $J_{p_0}$.
    On the other hand, $i_1$ and $p_1$ are fixed for $y$ by the position of the symbol $1$, thus
    \[w(y,i,j) = J_{p_0} \cdot 2^{p_1} = \frac{d}{2^{p_0}} \cdot 2^{p_1}.\]
    Their product is
    \[w(x,i,j)w(y,i,j) = \Theta\left(2^{p_0}\cdot dk \cdot \frac{d}{2^{p_0}} \cdot 2^{p_1}\right) = \Theta(d^2k2^{p_1}).\]
    
    \item $i = i_1$, $j = j_1$.
    In this case $x_{i,j} = \star$ and $y_{i,j} = 1$.
    
    For $x$, $i_1$ is fixed, hence $p_0$ is also fixed; $i_0$ is not fixed, but $j_1 = j$ and $p_1$ is uniquely defined.
    Hence,
    \[w(x,i,j) = 2^{p_0} \cdot J_{p_1} = 2^{p_0} \cdot \frac{d}{2^{p_1}}.\]
    For $y$, the position of the symbol $1$ must necessarily change, hence
    \[w(y,i,j) = w(y) = \Theta(dk2^{p_1}).\]
    The product then is
    \[w(x,i,j)w(y,i,j) = \Theta\left(2^{p_0} \cdot \frac{d}{2^{p_1}} \cdot dk2^{p_1}\right) = \Theta(d^2k2^{p_0}) = O(d^2k2^{p_1}),\]
    as $p_0 \leq \frac k 4 < p_1$.
\end{enumerate}
We can see that in all cases the denominator in the ratio of the adversary bound is $O(d^2k2^{p_1})$.
Therefore,
\[\frac{w(x)w(y)}{w(x,i,j)w(y,i,j)} = \Omega\left(\frac{d^3k^32^{p_1}}{d^2k2^{p_1}}\right) = \Omega(dk^2) = \Omega(d \log^2 d)\]
and since $\log T = \Theta(\log(d \log d)) = \Theta(\log d + \log \log d) = \Theta(\log d)$, we have
\[\Q(f_T) \geq \Q(f_{T'}) = \Omega\left(\sqrt{d \log^2 d}\right) = \Omega\left(\sqrt{T \log T}\right). \qedhere\]
\end{proof}

\section{Conclusion}

In this paper, we developed a new quantum algorithm and a new quantum lower bound for variable time search. 
Our quantum algorithm has complexity $O(\sqrt{T} \log n)$, compared to $O(\sqrt{T} \log^{1.5} T)$ for the 
best previously known algorithm (quantum variable time amplitude amplification \cite{ambainis2012variable} instantiated to the case of search).
It also has the advantage of being simpler than previous quantum algorithms for variable time search.
If the recursive structure is unrolled, our algorithm consists of checking algorithms $C_{T_i}$ for various times $T_i$ interleaved with Grover diffusion steps. Thus, the structure is the essentially same as for regular search and the main difference is that $C_{T_i}$ for different $i$ are substituted at different query steps.

We note that our algorithm has a stronger assumption about $T$: we assume that an upper bound estimate $T\geq \sum_{i=1}^n t_i^2$ is provided as an input to the algorithm
and the complexity depends on this estimate $T$, rather than the actual $\sum_{i=1}^n t_i^2$. Possibly, this assumption can be removed by a doubling strategy that tries values of $T$ that keep increasing by a factor of 2 but the details remain to be worked out.

Our quantum lower bound is $\Omega(\sqrt{T \log T})$ which improves over the previously known $\Omega(\sqrt{T})$ lower bound. This shows that variable time search for the ``unknown times'' case (when the times $t_1, \ldots, t_n$ are not known in advance and cannot be used to design the quantum algorithm) is more difficult than for the ``known times'' case (which can be solved with complexity $\Theta(\sqrt{T})$).

A gap between the upper and lower bounds remains but is now just a factor of $\sqrt{\log T}$. Possibly, this is due to 
the lower bound using a set of inputs for which an approximate distribution of values $t_i$ is fixed.
In such a case, the problem may be easier than in the general case, as an approximately fixed distribution of $t_i$ can be used for algorithm design.

{\bf Acknowledgments.} We thank Krišjānis Prūsis for useful discussions on the lower bound proof. The authors are grateful to the anonymous referees for the helpful comments and suggestions. This research was supported by the ERDF project 1.1.1.5/18/A/020.

\bibliographystyle{plainlink}
\bibliography{bibliography}

\begin{appendices}
\crefalias{section}{appendix}
\section{Proofs of \texorpdfstring{\cref{thm:recursiveamplitude,thm:C1-3}}{Lemmas 2 and 3}}\label{sec:appA}
\recursiveamplitude*
\begin{proof}
	
	For each $j$ express the final state of $A_j$   in the canonical basis as
	\[
	\sum_{i=1}^n \beta_{ij} \ket {i, a_{ij}},
	\]
	where $a_{ij} \in \{ 0,1,*\} $ and $a_{ij} = 0$ iff $x_i=0$ and $t_i \leq T_j$ (i.e., iff $i \notin \mc S_j$). Initially, $\beta_{i0}= n^{-1/2}$ for all $i$. Then
	\begin{equation*}
		\sin^2 \alpha_j = \sum_{i \in \mc S_j} \lrv{\beta_{ij}}^2, 
	\end{equation*}
	for all $j$. 
	To see how the amplitude $\beta_{i(j+1)}$ is related to $\beta_{ij}$, consider how the state evolves under $A_{j+1}$:
	\begin{itemize}
		\item the final state of $A_{j}$ is 
		\[
		\sum_{i \in [n] \setminus   \mc S_{j} }
		\beta_{ij} \ket {i, 0}
		+
		\sum_{i \in \mc S_{j} }
		\beta_{ij}\ket {i, a_{ij}},
		\]
		by the definition of $\beta_{ij}$; moreover, $a_{ij}\in \{ 1,*\}$ for all $i \in \mc S_{j} $.
		
		\item Amplitude amplification $A_{j}(1)$ results in the state
		\[
		\sum_{i \in [n] \setminus   \mc S_{j} }
		\frac{\cos(3\alpha_{j})}{\cos \alpha_{j}} \beta_{ij} \ket {i, 0}
		+
		\sum_{i \in \mc S_{j} }
		\frac{\sin(3\alpha_{j})}{\sin \alpha_{j}} \beta_{ij}\ket {i, a_{ij}}
		.
		\]
		
		\item An application of $C_{T_{j+1}}$ transforms  this state to
		\[
		\sum_{i \in [n] \setminus   \mc S_{j} }
		\frac{\cos(3\alpha_{j})}{\cos \alpha_{j}} \beta_{ij} \ket {i, 0}
		+
		\sum_{i \in \mc S_{j} \setminus  \mc S_{j+1} }
		\frac{\sin(3\alpha_{j})}{\sin \alpha_{j}} \beta_{ij}\ket {i, 0}
		+
		\sum_{i \in   \mc S_{j+1} }
		\frac{\sin(3\alpha_{j})}{\sin \alpha_{j}} \beta_{ij}\ket {i, a_{i(j+1)}}.
		\]
	\end{itemize}

	We conclude that
	\begin{equation}\label{eq:nextBetaJ}
		\beta_{i(j+1)}
		=
		\begin{cases}
			\beta_{ij}\frac{\sin(3\alpha_{j})}{\sin \alpha_{j}} , 
			&  i \in  \mc S_{j}
			, \\
			\beta_{ij} \frac{\cos(3\alpha_{j})}{\cos \alpha_{j}}, 
			& i \in [n] \setminus   \mc S_{j}.
		\end{cases}
	\end{equation}
	In particular, for any $j \in [\ell]$ and $i \in \mc S_j $ we have
	\begin{equation*}
		\beta_{ij} = 
		\frac{1}{\sqrt n}\prod_{k=1}^{j-1}  \frac{\sin (3\alpha_{k})}{\sin\alpha_{k}},
	\end{equation*} 
	since   each such $i$  is in  $\mc S_k$, $k\leq j-1$, thus, by \eqref{eq:nextBetaJ}, the respective amplitude gets multiplied by $ \frac{\sin(3\alpha_{k})}{\sin \alpha_{k}} $ at each step. This establishes the second part of the lemma (that the amplitudes $\beta_{ij}$ are all equal for any $i \in \mc S_j$). For the first part, we arrive at 
	\begin{equation*}
		\sin^2 \alpha_j 
		= \sum_{i \in \mc S_j} \lrv{\beta_{ij}}^2
		=  \sum_{i \in \mc S_j} 
		\prod_{k=1}^{j-1}  \lr{\frac{\sin (3\alpha_{k})}{\sin\alpha_{k}}}^2
		=
		\frac{s_j}{n}
		\prod_{k=1}^{j-1}  \lr{\frac{\sin (3\alpha_{k})}{\sin\alpha_{k}}}^2.
	\end{equation*} 
\end{proof}

\cOneThree*
\begin{proof}

	We will  prove the following inequality:
	\begin{equation}\label{eq:sumTillSL}
		\sum_{j=1}^{\ell-1} s_j 9^j < \frac{9n}{4} .
	\end{equation}
	Then \ref{enum:1} will immediately follow, since each term on \eqref{eq:sumTillSL} is nonnegative. Furthermore, also \ref{enum:2} follows from \eqref{eq:sumTillSL} via  
	{the} generalized Bernoulli's inequality:
	\begin{align*}
		&
		\prod_{j=1}^{\ell-1}
		\lr{ 1 - \frac{4 s_j}{27n} \cdot 9^j }
		\geq
		1 - \frac{4}{27n}\sum_{j=1}^{\ell-1} s_j 9^j
		\geq 1 -\frac{4}{27n} \cdot \frac{9n}{4} = \frac{2}{3}.
	\end{align*}
	
	First we    observe that 
	\begin{equation*}
		\sum_{j=1}^d \sum_{i \in \mc S_{j-1} \setminus \mc S_j } t_i^2
		=
		\sum_{i\in [n]\setminus \mc M} t_i^2 
		<
		\sum_{i\in [n] } t_i^2  \leq T.
	\end{equation*}
	Notice that each set difference  $\mc S_{j-1} \setminus \mc S_j $
	can be characterized as follows:
	\begin{equation*}
		\mc S_{j-1} \setminus \mc S_j = 	\lrb{ i \in [n] \ : \   \lr{T_{j-1} < t_i \leq T_j} \land x_i =0}.
	\end{equation*}
	Therefore all $t_i^2$ s.t.~$i \in \mc S_{j-1} \setminus \mc S_j $ satisfy the bound
	\begin{equation*}
		t_i^2  \geq 
		T_{j-1}^2 =
		\begin{cases}
			\frac{9^{j-1} T}{n}, & j>1, \\
			0, & j=1.
		\end{cases}
	\end{equation*}
	Thus we obtain the following inequality:
	\begin{equation*}
		\frac{T}{n}\sum_{j=2}^d 
		9^{j-1}  \lrv{\mc S_{j-1} \setminus \mc S_j}
		<
		\sum_{j=1}^d \sum_{i \in \mc S_{j-1} \setminus \mc S_j } t_i^2
		< T
	\end{equation*}
or
	\begin{equation}\label{eq:skIneq}
		\sum_{k=1}^{d-1} 9^k\lr{s_k - s_{k+1}} <  n.
	\end{equation}
	We also expand  $9^\ell s_\ell$ as follows, taking into account $s_d=m$:
	\begin{equation*}
		{9^\ell s_\ell} 
		=
		9^{\ell}  \lr{s_{\ell} - s_{\ell+1} } 
		+
		\frac{1}{9} \cdot 9^{\ell+1} \lr{s_{\ell+1} - s_{\ell+2} }
		+
		\ldots
		+
		\frac{1}{9^{d-1-\ell}} \cdot 
		9^{d-1}  \lr{s_{d-1} - s_{d} }
		+
		9^\ell m.
	\end{equation*}
	From this equality, taking into account $ s_k - s_{k+1} \geq 0 $, we can upper bound $9^\ell s_\ell  $ as
	\begin{equation}\label{eq:9lsl} 
		9^\ell s_\ell 
		\leq 
		\sum_{k=\ell}^{d-1} 9^k\lr{s_k - s_{k+1}} + 9^{\ell}  m 
	\end{equation}

	Rewrite \eqref{eq:skIneq} as
	\begin{align*}
		& 
		s_1 + \frac{8}{9}\sum_{k=1}^{\ell-1} 9^k s_k-9^{\ell-1} s_\ell 
		+
		\sum_{k=\ell}^{d-1} 9^k\lr{s_k - s_{k+1}}
		<n
	\end{align*}
	and apply \eqref{eq:9lsl} to obtain 
	\begin{align*}
		& 
		s_1 + \frac{8}{9}\sum_{k=1}^{\ell-1} 9^k s_k
		+
		\sum_{k=\ell}^{d-1} 9^k\lr{s_k - s_{k+1}}
		< n + 9^{\ell-1} s_\ell 
		\leq 
		n + \frac{1}{9} \sum_{k=\ell}^{d-1} 9^k\lr{s_k - s_{k+1}} + 9^{\ell-1}  m  
		\\
		&
		\frac{8}{9}\sum_{k=1}^{\ell-1} 9^k s_k
		+
		\frac{8}{9} \sum_{k=\ell}^{d-1} 9^k\lr{s_k - s_{k+1}}
		< n -s_1 + 9^{\ell-1}  m  
		\\
		& 
		8 \sum_{k=1}^{\ell-1} 9^k s_k  
		<
		9n - 9s_1 +9^{\ell}m
		<
		9n +9^{\ell}m.
	\end{align*}
	By the choice of $\ell$ we have $9^{\ell-1} \leq \frac{n}{m} $, therefore we arrive at 
	\begin{equation*}
		8 \sum_{k=1}^{\ell-1} 9^k s_k   
		<
		9n +9 \frac{n}{m} \cdot m = 18n,
	\end{equation*}
	which is equivalent to \eqref{eq:sumTillSL}.
	
	Finally, to show \ref{enum:3}, we recall that $ s_\ell \geq s_d = m $.  Again by the choice of $ \ell $,  $ 9^\ell \geq \frac{n}{m} $. Consequently,
	\begin{equation*}
		9^\ell s_\ell 
		\geq 
		\frac{n}{m} \cdot m= n,
	\end{equation*}
	as claimed. 
\end{proof}
\end{appendices}

\end{document}